\theoremstyle{definition}
\newtheorem{definition}{Definition}
\theoremstyle{remark}
\newtheorem{claim}{Claim} 
\algrenewcommand\textproc{}
\newcommand{\remove}[1]{}
\begin{document}

\newcommand{\lidar}{LiDAR}
\newcommand{\dbscan}{DBScan}
\newcommand{\PCLeCluster}{\textit{PCL\_E\_Cluster}}
\newcommand{\lisco}{\textit{Lisco}}
\newcommand{\minpts}{\textit{minPts}}
\newcommand{\ford}{\textit{Ford Campus}}

\title{\lisco{}: A Continuous Approach in \lidar{} Point-cloud Clustering}

\author{\IEEEauthorblockN{Hannaneh Najdataei, Yiannis Nikolakopoulos, Vincenzo Gulisano, Marina Papatriantafilou}
\IEEEauthorblockA{Chalmers University of Technology
\\Gothenburg, Sweden\\
\{hannajd,ioaniko,vinmas,ptrianta\}@chalmers.se}
}

\maketitle

\begin{abstract}
The light detection and ranging (\lidar{}) technology allows to sense surrounding objects with fine-grained resolution in a large areas.
Their data (aka point clouds), generated  continuously at very high rates, can provide information to support automated functionality in cyberphysical systems.  
Clustering of point clouds is a key problem to extract this type of information.
Methods for solving the problem
in a continuous fashion can facilitate improved processing in e.g. fog architectures, allowing continuous, streaming processing of data close to the sources.
We propose \lisco{}, a single-pass continuous Euclidean-distance-based clustering of \lidar{} point
clouds, that maximizes the granularity of the data
processing pipeline. Besides its algorithmic analysis, we provide a thorough experimental evaluation and highlight its up to 3x improvements and its scalability benefits compared to the baseline, using
both real-world datasets as well as synthetic ones to fully
explore the worst-cases.

\end{abstract}

\begin{IEEEkeywords}
streaming, clustering, pointcloud, \lidar{}
\end{IEEEkeywords}

\IEEEpeerreviewmaketitle


\section{Introduction}\label{sec:introduction}

Active sensors that are able to measure properties of the surrounding environment with very fine-grained time resolution are utilized more and more in cyber-physical systems, such as autonomous vehicles, digitalized automated industrial environments and more.
These sensors can produce large streams of readings, with the \lidar{} (light detection and ranging) sensor being a prominent example. A \lidar{} sensor commonly mounts several lasers on a rotating column; at each rotation step, these lasers shoot light rays and, based on the time the reflected rays take to reach back the sensor, they produce a stream of distance readings at high rates, in the realm of million of readings per second.

As common in big data applications, one of the challenges in leveraging the information carried by such large streams is the need for efficient methods that can rapidly distill the valuable information from the raw measurements~\cite{Gulisano2015debs,fu2014bigdata,Nikolakopoulos2016ericsson,Zacheilas2017debs}.
A common problem in the analysis of the \lidar{} sensor data is 
\remove{segmentation, which can be addressed through} \emph{clustering} of the raw distance measurements, in order to detect objects surrounding the sensor~\cite{rusu20113d}.
This can, for instance, enable the detection of surrounding obstacles and prevent accidents (e.g. avoiding pedestrians in autonomous driving) or study the motion feasibility of objects in factories' production paths~\cite{FCC-report}. 


\subsection*{Challenges and contributions}

The processing time incurred by clustering of raw measurements (aka \emph{point clouds}) represents one of the main challenges in this context because of the high rates and the need for the clustering outcome to be available in a timely manner for it to be useful. Furthermore, the accuracy of the clustering is challenging as well, since readings from objects that are at different distances from the sensor can vary a lot in density. 

A key performance enabler for high-rate data streaming analysis is the pipelining of its composing tasks.
Nevertheless, common state-of-the art approaches for clustering of \lidar{} data (cf \cite{woo2002new},\cite{rusu2009close},\cite{vo2015octree} and more detail in \autoref{sec:related} elaborating on related work) first organize the points to be clustered (e.g. sorting them so that points close in space are also close in the data structure maintaining them) and only then perform the clustering by querying the organized data (e.g. by running neighbor-query as discussed in \autoref{sec:preliminaries}).
By doing this, they introduce a batch based processing that affects the clustering performance.

To overcome this, we target a single-pass analysis that will enable fine-grained pipelining in processing the data.
We propose a new method that achieves \lidar{} data-point clustering, called \lisco{}, that allows to boost processing throughput by maximizing the internal pipelining of the analysis steps, through a key idea that can exploit the inner ordering of the data generated by a \lidar{} sensor.


\remove{ 
In the context of clustering of \lidar{} data, the pipelining is prevented by the supporting data structures used by state-of-the-art implementations.
Specifically, by the sorting data structures that are first populated with the points and later queries to identify neighbors.

The key intuition behind \lisco{} that allows to boost performance by maximizing the internal pipelining of the clustering is to exploit the fact that data comes already sorted (to a certain extent) and the search can be made without external support (which would break the one-pass requirement).
} 

In more detail, we make the following contributions:
\begin{enumerate}
    \item We introduce \lisco{}, a new algorithm for Euclidean-distance-based clustering of \lidar{} point clouds, that maximizes the granularity of the data processing pipeline, without the need for supporting sorting data structures. \item We provide a fully implemented prototype and we discuss \lisco{}'s complexity in connection to the state-of-the-art Euclidean-distance-based clustering method in the Point Cloud Library (PCL), which we adopt as baseline due to its known accuracy, efficiency and wide use-base \cite{rusu20113d}.\footnote{Available through: \texttt{http://pointclouds.org}}
    \item We perform a thorough comparative evaluation, using both real-world datasets as well as synthetic ones to fully explore the worst-cases and the spectrum of trade-offs. We achieve a significant improvement, up to 3 times faster than the baseline and we also show significant scalability benefits.
\end{enumerate}

The rest of the paper is organized as follows. \autoref{sec:preliminaries}~overviews the \lidar{} sensor, the data it produces and clustering-related techniques that exist for such data.
\autoref{sec:lisco}~presents the main idea, the outline and argues about the properties of the proposed \lisco{} method, while the algorithmic implementation is given in \autoref{sec:implementation}. 
We evaluate the proposed method in \autoref{sec:evaluation}. Finally, we discuss related work and conclude in \autoref{sec:relatedwork} and \autoref{sec:conclusions}, respectively.

\section{Preliminaries}
\label{sec:preliminaries}

In this section, we give details about the key properties of \lidar{} sensors and the data they generate. 
We also provide a detailed problem description and evaluation criteria of solutions. Finally, we briefly describe the Euclidean-distance based clustering method in PCL, which we use as baseline as explained in the introduction.

\subsection{\lidar{} - sensor and data}
\label{sec:lidar}

The light detection and ranging (\lidar{}) technology allows sensing surrounding objects with fine-grained resolution in a large areas.

The \lidar{} sensor mounts $L$ lasers in a column, each measuring the distance from the target by means of the time difference from emitted and reflected light pulses.
The column of lasers performs $R$ rotations per second, each consisting of $S$ steps, producing a set of $n$ points, also called \textit{point cloud}.
The number of points reported by the \lidar{} sensor for each rotation can be lower than $L \times S$, since some of the emitted pulses might not hit any obstacle.
We refer to the angle in the x-y plane between two consecutive steps\footnote{If the horizontal angle between steps is not constant and lasers  are not perfectly aligned in the x-y plane, then $\Delta \alpha$ refers to the minimum such angle. Similarly, if the vertical angle between lasers is not constant, $\Delta \theta$ refers to the minimum such angle.} as $\Delta \alpha$ (e.g. measured in \emph{radians}) and to the elevation angle from the x-y plane between two consecutive lasers as $\Delta \theta$.
Each point $p$ is described through attributes $\left<d,l,s\right>$ where $d$, $l$, $s$ are the measured distance, the laser index and the step index.
The measured distance, $d$, is a value greater than or equal to 0. Value 0 for $d$ shows no reflection in the point.
In the following, we use the notation $p_x$ to refer to attribute $x$ of point $p$ (i.e., $p_d$ refers to the distance reported for point~$p$).

\begin{figure*}[ht!]
  \centering
  \includegraphics[width=0.8\linewidth]{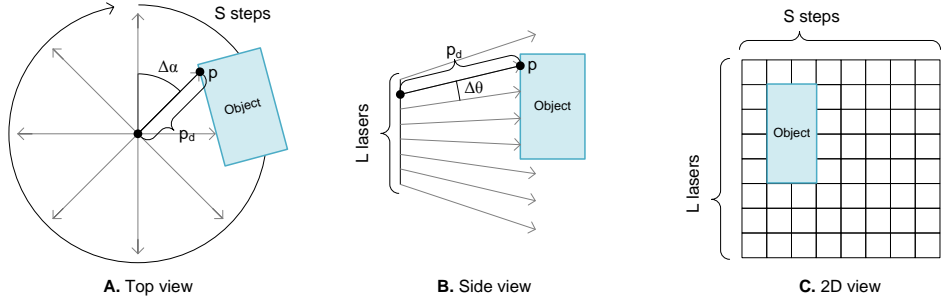}
  \caption{Top and side views for the \lidar{}'s emitted light pulses showing steps and lasers, together with the resulting 2D view. In the 2D view, non-reflected pulses are white while reflected ones are coloured.}
  \label{fig:lidar}
\end{figure*}

Let us consider in the following examples that $L=8$ and $S=8$. We visually present the points by unfolding them as a 2D matrix, where each column contains $L$ rows and each rotation step is a column. We assume that new data is delivered from the physical sensor with the granularity of one rotation step (Figure \ref{fig:lidar}).

\subsection{Problem formulation: from point clouds to clusters}
\label{sec:clustering}

Given a set of points corresponding to \lidar{} measurements, we want to identify disjoint groups of them that can be potential objects in the environment surrounding  the sensor. A natural criterion, commonly used in the literature and applications, is the distance between points. In particular, adopting the problem definition  in \cite{rusu2010semantic} (Chapter 4), which we paraphrase here for ease of reference:

\theoremstyle{definition} 
\begin{definition}\label{def:problem}
[\emph{Euclidean-distance based clustering}\/] Given $n$ points in 3D
space, we seek to partition them into some (unknown) number of clusters $C$ using the Euclidean-distance metric, such that every cluster contains at least a predefined number of points ($minPts$), that is $\forall j, |C_j| \geq minPts$, and all clusters are disjoint, that is $C_i \cap C_j = \emptyset, \forall i \neq j$.
Two points $p_i$ and $p_j$ should be clustered together if their
Euclidean distance $||p_i-p_j||_2$  is at most $\epsilon$, with $\epsilon$ being a predefined threshold. 
\end{definition}

To facilitate the presentation of the baseline algorithm and our proposed one, we introduce the
$\epsilon$\textit{-neighborhood of a point} $p$: the set of points of the input whose Euclidean distance from $p$ is at most $\epsilon$.
A set of points closed under the union of their  $\epsilon$-neighborhoods is characterized as \textit{noise} if its cardinality is less than $minPts$.

It should be noticed that, when clustering data from scenarios like the vehicular one, a pre-processing task is usually defined to filter out points that refer to the ground, since many objects laying on it would be otherwise clustered together. Since ground removal can be implemented as a non expensive and continuous filtering operation \cite{himmelsbach2010fast} (e.g. by removing points below a certain threshold, as we do in \autoref{sec:evaluation}) we do not further discuss it in the remainder. 

\subsection{Euclidean-distance based clustering in PCL}
\label{sec:pcl}

PCL~\cite{rusu20113d} provides a set of tools based on a collection of state-of-the-art algorithms to process 3D data, including filtering, clustering, surface reconstruction and more.
In this section we review its method for cluster extraction, which is an Euclidean-distance based clustering and we use it as a baseline. For brevity we call this algorithm \PCLeCluster{} in the rest of this document.

\PCLeCluster{} works on  batches of data points.
It first builds a kd-tree to facilitate finding the nearest neighbors  of points. 
Subsequently, it proceeds as described by algorithm \ref{alg:pcl} to extract the clusters.

\begin{algorithm}
\small
\caption{Main loop  of \PCLeCluster{}}\label{alg:pcl}
\begin{algorithmic}[1]
\State clusters = $\emptyset$
\For{$p \in$ P}
    \State Q = $\emptyset$
    \If{$p${.status $\neq$ processed}}
        \State Q.\texttt{add}($p$)  \label{pcl:add}
        \For{$q \in$ Q} 
            \State $q${.status = 'processed'}
            \State $N$ = \texttt{GetNeighbors}($q,\epsilon$)
            \State Q.\texttt{addAll}($N$) 
        \EndFor
        \If{$\text{size(Q)} \geq \text{\minpts{}}$}
            clusters.\texttt{add}(Q)
        \EndIf
    \EndIf
\EndFor
\end{algorithmic}
\end{algorithm}

Starting from any arbitrary unprocessed point $p$, the algorithm adds it in an empty list, $Q$ (line \ref{pcl:add}). Then, \PCLeCluster{} adds all the points of the $\epsilon$-neighborhood of each member of Q to it.
After processing all members of the list $Q$, if its size is greater than or equal to $\minpts{}$, the list is returned a cluster. The algorithm continues with the next unprocessed point, to explore another cluster. The procedure is terminated when all input points have been processed.
We discuss the computational complexity of algorithm \ref{alg:pcl} in \autoref{sec:analysis}.




\section{\lisco{}}\label{sec:lisco}

We present in this section \lisco{}. 
We first discuss the intuition behind the algorithm, i.e. a continuous, single-pass approach in clustering in contrast to existing batch based methods as discussed in \autoref{sec:preliminaries}. Subsequently, we focus on the challenges and trade-offs that continuous clustering introduces.


\subsection{Towards continuous clustering}
\label{sec:lisco:nosupport}

Based on the clustering requirements as introduced in Definition~\ref{def:problem}, each point $p$ reported by the \lidar{} sensor is temporarily clustered with each neighbor point $p'$ within distance $\epsilon$ from $p$. A cluster of points is eventually delivered if it contains at least \minpts{} points, otherwise its points are characterized as noise.
As discussed in \autoref{sec:pcl}, implementations such as the one provided by PCL limit the pipelining of the analysis because of processing stages that cannot be executed concurrently. More concretely, they first traverse all the points to populate a supporting data structure that facilitates finding the points in the $\epsilon$-neighborhood of each point $p$ (a kd-tree in the case of PCL) and subsequently they traverse a second time all the points to cluster them. 

\begin{figure*}[t!]
  \centering
  \includegraphics[width=0.8\linewidth]{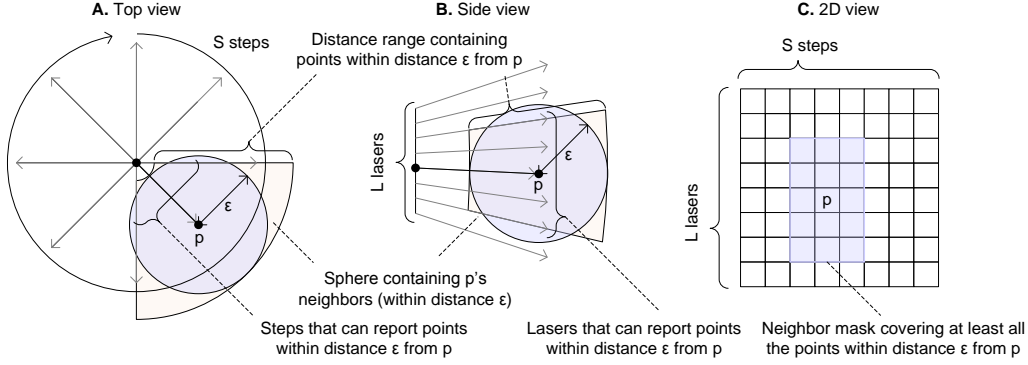}
  \caption{Top and side views (A and B) showing which steps and lasers, respectively, to include in the \textit{neighbor mask} (C) for the latter to contain at least all the points within distance $\epsilon$ from $p$.}
  \label{fig:intuition}
\end{figure*}

As shown in Figure~\ref{fig:intuition}, the intuition behind \lisco{} is that the search space for the $\epsilon$-neighborhood of point $p$ can be translated into a set of readings given by certain \emph{steps} (Figure ~\ref{fig:intuition}.A) and \emph{lasers} (Figure ~\ref{fig:intuition}.B) around $p$, and within a certain distance range from the \lidar{}'s emitting sensors. 
It should be noted that these constraints describe a region of space that may also contain points whose distance from $p$ is greater than $\epsilon$. 
Nevertheless, if points within distance $\epsilon$ from $p$ exist, they will be returned by one of these steps and lasers and will fall within the given distance range, as we further explain in~\autoref{sec:analysis}.

Thus, in order to discover the $\epsilon$-neighborhood of point $p$, by leveraging the sorted delivering of tuples from the \lidar{} sensor step after step, it is enough to explore a \textit{neighbor mask} centered in $p$, as shown in Figure \ref{fig:intuition}.C.
In this way, we eliminate the need for a search-optimized data structure like kd-trees, and we allow the algorithm to process data as they are received from the sensor.



\subsection{Coping with the challenges of continuous clustering}

The continuous one-pass analysis of \lisco{} introduces several challenges that are not found in batch based approaches. Here we discuss and explain how they are addressed in \lisco{} in the following.

\subsubsection{Partial view of neighbor mask}

Algorithm \ref{alg:getNeighbors} shows how the neighbor mask is computed, i.e. the number of previous and subsequent steps $\sigma$ and the number of upper and lower lasers $\lambda$ that possibly contain points within distance $\epsilon$ from $p$.
As discussed in \autoref{sec:preliminaries}, $\Delta \alpha$ and $\Delta \theta$ refer to the minimum angle differences between two consecutive steps and lasers, respectively.

\begin{algorithm}
\small
\caption{Given point $p$, compute the number of previous steps $\sigma$ and upper and lower lasers $\lambda$ bounding at least all the points within distance $\epsilon$ from $p$.}\label{alg:getNeighbors}
\begin{algorithmic}[1]
\Procedure{\texttt{getNeighborMask}}{$p$}
	\State $\lambda \gets \lceil | \arcsin (\epsilon / p_d ) | / \Delta \theta \rceil$
	\State $\sigma \gets \lceil | \arcsin ( \epsilon / p_d ) | / \Delta \alpha \rceil $ \label{getNeighbors:steps}
	\State \textbf{return} $\lambda,\sigma$
\EndProcedure
\end{algorithmic}
\end{algorithm}

\begin{algorithm}
\small
\caption{Main loop  of \lisco{}}\label{alg:liscogeneral}
\begin{algorithmic}[1]
\State $subclusters$ = $\emptyset$
\Event{reception of step $s$}
    \For{$l \in {1,\ldots, L}$}
        \State $p$ = $M[l,s]$ \label{lisco:readpoint}
        \If{$p_d>0$} \label{lisco:if}
            \State $\lambda,\sigma$ = \texttt{getNeighborMask}($p$) 
            \State \texttt{cluster}($p$,$\sigma$,$\lambda$,$subclusters$)
        \EndIf
    \EndFor
\EndEvent
\Event{all steps processed}
    \For{$subcluster$ $\in$ $subclusters$}
        \If{size($subcluster$) $\geq \text{\minpts{}}$}
            \Return $subcluster$
        \EndIf
    \EndFor
\EndEvent
\end{algorithmic}
\end{algorithm}

The main loop of \lisco{}, Algorithm~\ref{alg:liscogeneral}, processes points in $M$, the 2D matrix of input points (described in \autoref{sec:preliminaries}), in step and laser order (Line~\ref{lisco:readpoint}).
Each point is processed only if its distance is greater than 0 (that is, if the \lidar{}'s pulse has been reflected for the point's step and laser index) (Line~\ref{lisco:if}).
Once all the points have been processed, all the clusters containing  at least \minpts{}
points are delivered. As it can be noticed, the parameter \minpts{} does not have an effect in the complexity of the algorithm, since it is only used to filter the delivered clusters at the very end of the clustering process.
We describe how clusters are discovered and managed within the function \texttt{cluster} in the following.

Given points $p_1$ and $p_2$ within distance $\epsilon$,
$p_1$'s neighbor mask will contain $p_2$ and vice versa. To avoid comparing each pair of neighboring points twice, it is enough to consistently traverse half the neighbor mask.
\lisco{} explores the half containing the $\lambda$ lasers above and below $p$ and the $\sigma$ steps on $p$'s left side. This allows for points in $p$'s step to be processed as soon as they are delivered (points on $p$'s right side are yet to be delivered upon reception of $p$'s step points). 
Take into account that, for a minority of steps, not all the points on $p$'s left side lay on columns with a lower index than $p$'s (i.e., they are not stored on the left side of the $M$ matrix). For instance, if a point in column 2 should be compared with 3 columns on the left ($\lambda=3$), then it should be compared with columns S-1, S and 1. In such a case, some comparisons must be postponed until such steps are delivered. A point $p'$ on the left side of $p$ and within distance $\epsilon$ lays on a lower index than $p$ if $0 \leq p_s - p'_s \leq \sigma$. On the other hand, if $p'_s + \sigma > S$, then $p'$ is on the left side and within distance $\epsilon$ from points in steps $1,\ldots,p_s + \sigma - S$. In both cases, the clustering semantics defined in \autoref{sec:preliminaries} require $p$ and $p'$ to be compared, as we do in Algorithm~\ref{alg:lisco1}.


\subsubsection{Continuous cluster management}
A second challenge brought by the continuous nature of \lisco{} is that subclusters evolve as more steps arrive. Hence, a cluster identified once all the points in a rotation are processed might be the union of several previously discovered subclusters, as seen in Figure~\ref{fig:subclusters}. Figure~\ref{fig:subclusters}.A shows the subclusters found when half of the points in the rotation have been processed.
In the example, 5 points have been clustered in $C_1$ and 4 points have been clustered in $C_2$. The other non-colored points have not been clustered since they had no neighbors within distance $\epsilon$.
Figure~\ref{fig:subclusters}.B shows the clusters found when all the points in the rotation have been processed. 
At this stage, the points previously clustered in different subclusters are now clustered together.
The point marked with $x$ represents the point that has one neighbor in each of the two disjoint subclusters found by \lisco{}. Once $x$ is processed, these two subclusters should be merged.

\begin{figure}[t!]
  \centering
  \includegraphics[width=0.8\linewidth]{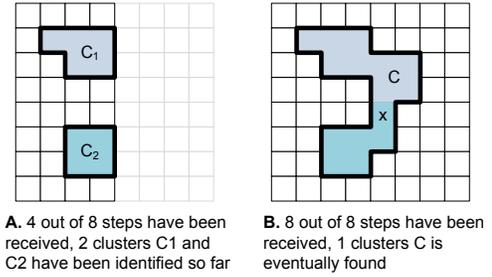}
  \caption{Example showing how the points of two subclusters, that evolve concurrently in \lisco{}'s continuous analysis, may end up in the same cluster at a subsequent step.}
  \label{fig:subclusters}
\end{figure}

Based on this observation, we introduce the following informal notions to facilitate the detailed description of \lisco{}. A \emph{subcluster} is a set of points that have been clustered together during the processing of the previously received steps of the input. A \emph{cluster} is a set of at least \minpts{} points that have been clustered together once all the steps of the input have been processed, i.e. a subcluster with cardinality at least \minpts{} is characterized as a cluster after all the steps have been processed. Finally, we consider each subcluster to have a unique identifier called \emph{head}.



Based on the above, one can notice that a subcluster can contain points that previously belonged to two or more subclusters.
Because subclusters are found continuously while the points of a rotation are being processed, the clustering algorithm requires methods to: (1)~retrieve the head of the points belonging to a subcluster and (2)~merge two subclusters together in order for the final cluster to be delivered as a single item.
In order to do this, we use in Algorithm~\ref{alg:lisco1} (overviewing the clustering process applied to each incoming point $p$) the functions \texttt{Head $H$ = getH(Point $p$)} and \texttt{merge(Head $H1$,Head $H2$)}. Once two subclusters are merged invoking function \texttt{merge}, we expect the function \texttt{getH} to return the same head for any point of the two subclusters. 
Without loss of generality, we assume this head to be $H_1$ in the following.
Because of this we define a third function \texttt{setH(point $p$,Head $H$)}. Finally, we define the function \texttt{createH()} to allow for newly discovered subclusters to be instantiated.

\begin{algorithm}
\small
\caption{Given point $p$, cluster it together with all the points already received from the \lidar{} sensor that are within distance $\epsilon$ from it.}\label{alg:lisco1}
\begin{algorithmic}[1]
\Procedure{\texttt{cluster}}{$p$,$\lambda$,$\sigma$,$subclusters$}
	\For{$p' | (0 \leq p_s - p'_s \leq \sigma \vee 1 \leq p'_s \leq p_s + \sigma - S) \wedge \left| p_l - p'_l \right| \leq \lambda \wedge \left|p_d - p'_d\right| \leq \epsilon$}
	    \State $H_1$=\texttt{getH}(p)
	    \State $H_2$=\texttt{getH}(p')
	    \If {$H_1 \neq H_2 \wedge ||p - p'||_2 \leq \epsilon $}
    	    \If{$H_1=\emptyset \wedge H_2=\emptyset$} \label{lisco1:caseA} 
    		        \State $H$ = \texttt{createH}() \label{lisco1:newStart}
    		        \State \texttt{setH}($p$,$H$)
    		        \State \texttt{setH}($p'$,$H$) \label{lisco1:newEnd}
    		        \State $subclusters$.\texttt{add}($H$)
    	    \ElsIf{$H_1 = \emptyset \wedge H_2 \neq \emptyset$} \label{lisco1:caseB}
    	        \State \texttt{setH}($p$,$H_2$)
            \ElsIf{$H_1 \neq \emptyset \wedge H_2 = \emptyset$} \label{lisco1:caseC}
                \State \texttt{setH}($p'$,$H_1$)
    		\Else  \label{lisco1:caseD}
    		    \State $subclusters$.\texttt{remove}($H_2$)
    		    \State \texttt{merge}($H_1$,$H_2$))
    		\EndIf
		\EndIf
	\EndFor
\EndProcedure
\end{algorithmic}
\end{algorithm}

As shown in Algorithm~\ref{alg:lisco1}, four different cases should be checked for two points within distance $\epsilon$ that do not belong to the same subcluster:
\begin{itemize}
    \item \textit{Line \ref{lisco1:caseA}}: None of the two points belongs to a subcluster. 
    In this case, a new subcluster head is created and set for both points
    \item \textit{Line \ref{lisco1:caseB}}: Point $p$ does not belong to a subcluster while point $p'$ does. 
    In such a case, point $p$ will refer to the same head as point $p'$.
    \item \textit{Line \ref{lisco1:caseC}}: Point $p$ belongs to a subcluster while point $p'$ does not. 
    In such a case, point $p'$ will refer to the same head as point $p$.
    \item \textit{Line \ref{lisco1:caseD}}: Both points $p$ and $p'$ have been clustered but to different subclusters. In such a case these two subclusters are merged together.
\end{itemize}

\section{Algorithmic implementation}\label{sec:implementation}

We discuss the details of the algorithmic implementation of \lisco{} in this section.

Data points are kept in a 2D matrix, $M$.
The number of rows and columns in the matrix is equal to the number of lasers and steps respectively.
Upon reception of a column of points from \lidar{}, which contains the reflected points of all lasers in one step, we store them in the corresponding column of the matrix. By using the laser and the step number, all the attributes of a point can be extracted in constant time. Each entry in $M$ holds the attributes of the corresponding point and a pointer (initially set to NULL) to the \textit{head of its subcluster}. 
The \emph{head} of a subcluster is defined as the point with the lowest indices in lexicographical order of steps and lasers during the creation of a new subcluster.
When two subclusters are merged, the head from the subcluster with the largest number of members is maintained.

$Subclusters$ is a \emph{hash map} used in order to keep track of subclusters and their corresponding members.
It is implemented as a linked list of arrays, where each key is the header of a subcluster and its members are stored in the array. 
If the size of a subcluster exceeds the size of the array, a new array is linked to the tail of the current array, so that subclusters can grow without restriction. At the end of the clustering procedure, we use $subclusters$ to traverse through subcluster heads. Each subcluster that has more than \minpts{} members is announced as a cluster, otherwise it is characterized as noise.

To keep  of \lisco{}'s time complexity low (discussed in \autoref{sec:analysis}) we aim at efficient time complexity of the main methods used in our algorithms.
As shown in Algorithm~\ref{alg:getNeighbors}, function \texttt{getNeighborMask} is executed in constant number of steps, since it boils down to a fixed number of numerical operations.
Similarly, functions \texttt{createH} and \texttt{setH} can be also implemented to incur in $O(1)$ complexity, as we discuss in the following.
The algorithmic implementation of \texttt{getH} and \texttt{merge} induces the following trade-off.
On the one hand, \texttt{merge} can be implemented to induce $O(1)$ time-cost; this can be done  by maintaining a hierarchy of subclusters being part of the same subcluster while incurring a higher cost for the \texttt{getH}, linear in the number of subclusters.
Figure~\ref{fig:approachB} shows how some of the points clustered together once all the data is processed point to head $H_1$ via head $H_2$. For these points the \texttt{getH} method has a cost higher than that of the points directly pointing to $H_1$, which depends on the chains induced by the data structure to maintain the hierarchy.
In the proposed implementation we opt for an $O(1)$ cost for method \texttt{getH} and a higher cost for \texttt{merge}, as seen in~Figure~\ref{fig:approachA} and~\autoref{sec:analysis}.
The reason, as can be seen in Algorithm~\ref{alg:lisco1} and based also on our empirical evaluation in \autoref{sec:evaluation},  is that \texttt{getH} is executed twice for pairs of points being compared, while \texttt{merge} is executed significantly less often.

\begin{figure}[t!]
  \centering
  \includegraphics[width=0.8\linewidth]{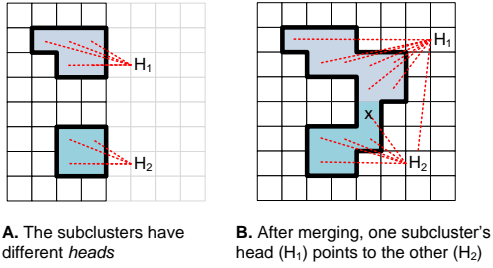}
  \caption{Possible implementation in which the \texttt{merge} method is made $O(1)$ by hierarchically linking heads of subclusters belonging to the same subcluster. Notice that the complexity of \texttt{getH} is no longer $O(1)$ but linear in the number of subclusters for some of the points.}
  \label{fig:approachB}
\end{figure}

\begin{figure}[t!]
  \centering
 \includegraphics[width=0.8\linewidth]{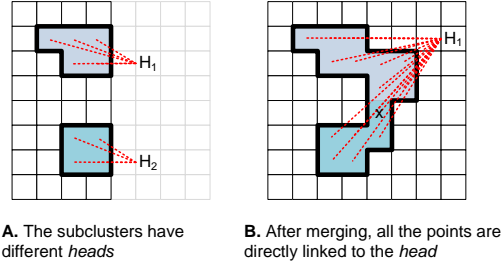}
  \caption{Possible implementation in which the \texttt{getH} method can run in $O(1)$ steps by having all points directly linked with the subcluster head. Notice the head of all the points of subcluster $C_2$ has been updated when the subcluster has been merged with subcluster $C_1$.}
  \label{fig:approachA}
\end{figure}

In detail, the implementations of the functions used in Algorithm~\ref{alg:lisco1} are as follows:\\
- \texttt{createH}: This function gets two points that do not belong to any subcluster, and returns the one with the lower index-pair for step and laser. Since the returned point will be head of a subcluster, a new node is created in the $subclusters$ and the head is mapped to it.\\
- \texttt{setH}: This function sets the pointer of a point to the head point. By calling this function, we are adding a point to a subcluster with an identified head. The point also needs to be added as the last element in the array of the mapped head.\\
- \texttt{getH}: This function reads the pointer to get the head point. If two points are in the same subcluster, they get the same head point as the result of this function.\\
- \texttt{merge}: If two points belong to different subclusters, \lisco{} merges the two subclusters by calling this function. It chooses the subcluster with bigger number of members as the base subcluster, and merges the other one by changing the head of its members to the base subcluster's head. After merging subclusters, it is also necessary to remove the merged subcluster's head from  the $subclusters$ hash map and append its array to the base subcluster's array.
\section{Analysis}
\label{sec:analysis}
\subsection{Correctness}

Based on \lisco{}'s functionality and algorithmic implementation, we discuss here why \lisco{}'s outcome satisfies Definition~\ref{def:problem}.

\begin{claim}\label{mainclaim}
If two points $p$ and $p'$ are in the $\epsilon$-neighborhood of each other they will either be in the same cluster at the end of \lisco{} procedure or be characterized as noise, in the same way as given in Definition~\ref{def:problem}.
\end{claim}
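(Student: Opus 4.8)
The plan is to reduce the claim to three facts and compose them. \textbf{(i) The neighbor mask is conservative:} every input point $p'$ with $\|p-p'\|_2\le\epsilon$ lies within $\sigma$ steps and $\lambda$ lasers of $p$ and inside the radial band $|p_d-p'_d|\le\epsilon$, where $\sigma,\lambda$ are computed from $p_d$ as in Algorithm~\ref{alg:getNeighbors}. \textbf{(ii) Coverage:} although \lisco{} scans only the left half of this mask in the step direction, every $\epsilon$-close pair of input points is examined by a call to \texttt{cluster}. \textbf{(iii) Head bookkeeping:} as soon as an $\epsilon$-close pair is examined, the two points receive a common \texttt{getH} value, and no operation ever undoes this. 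Granting (i)--(iii), at the end of the rotation two points with $\|p-p'\|_2\le\epsilon$ share a head; by transitivity every maximal set of input points closed under the union of their $\epsilon$-neighborhoods is contained in a single subcluster, and---since every head creation and every \texttt{merge} is triggered only by a genuinely $\epsilon$-close pair---a subcluster cannot straddle two such sets, so the final subclusters are exactly these maximal sets. The closing \texttt{all steps processed} event of Algorithm~\ref{alg:liscogeneral} then emits such a set as a cluster exactly when it has at least \minpts{} members and leaves it as noise otherwise, which is precisely the cluster/noise dichotomy of Definition~\ref{def:problem}.

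For (i) I would put the sensor at the origin and let $\phi$ be the angle between the viewing directions of $p$ and $p'$. The perpendicular distance from $p$ to the ray through $p'$ equals $p_d\sin\phi$ and cannot exceed $\|p-p'\|_2\le\epsilon$, hence $\phi\le|\arcsin(\epsilon/p_d)|$; by a routine spherical-trigonometry argument---using that the lasers lie close to the x-y plane, so the azimuthal and elevation components of a small angular displacement are each at most the displacement itself---the azimuth gap and the elevation gap are each at most $|\arcsin(\epsilon/p_d)|$, and dividing by the minimum per-step angle $\Delta\alpha$, resp.\ per-laser angle $\Delta\theta$, and rounding up yields the step gap $\le\sigma$ and laser gap $\le\lambda$. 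Crucially this bound involves only $p_d$, so $p$'s own mask reaches every $\epsilon$-close $p'$ regardless of $p'_d$; and the radial bound is just the reverse triangle inequality $|p_d-p'_d|=\big|\,\|p\|-\|p'\|\,\big|\le\|p-p'\|_2$. I would also note that the mask can contain points \emph{farther} than $\epsilon$ from $p$, which is exactly why Algorithm~\ref{alg:lisco1} re-tests $\|p-p'\|_2\le\epsilon$ before acting, so conservativeness is all that is required.

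For (ii) I would fix an $\epsilon$-close pair and let $q$ be whichever of the two is scanned later in the order $M[1,1],\ldots,M[L,1],M[1,2],\ldots,M[L,S]$ and $q'$ the other. If $q_s=q'_s$, the first disjunct $0\le q_s-q'_s\le\sigma$ of the loop guard of Algorithm~\ref{alg:lisco1} holds with offset $0$; if $q_s>q'_s$, then either $q_s-q'_s\le\sigma$ (first disjunct again) or, because the circular step distance is at most $\sigma$ by (i), we get $1\le q'_s\le q_s+\sigma-S$ (second disjunct)---exactly the wrap-around situation described after Algorithm~\ref{alg:lisco1}. In all cases (i) also gives $|q_l-q'_l|\le\lambda$ and $|q_d-q'_d|\le\epsilon$, so $q'$ satisfies the predicate when \texttt{cluster}$(q,\ldots)$ runs and the pair is examined. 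Possible re-examinations (which can occur for same-step pairs, since both lasers above and below are scanned) are harmless: once the two points share a head the inner test $H_1\neq H_2$ makes the body a no-op.

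For (iii) I would inspect the four branches of Algorithm~\ref{alg:lisco1} taken when $H_1\neq H_2$ and $\|p-p'\|_2\le\epsilon$: case~A creates one head and sets it on both points; cases~B and~C attach the head-less point to the other's head; case~D removes $H_2$ and merges, after which---by the implementations of \texttt{merge} and \texttt{getH} in \autoref{sec:implementation}---\texttt{getH} returns the surviving head for every member of both former subclusters. Hence immediately after the comparison \texttt{getH}$(p)$ and \texttt{getH}$(p')$ agree, and since no branch ever splits a subcluster this equality persists until the rotation ends. I expect facts (i) and (ii)---pinning down the $\arcsin$ angular bound, and checking that the left-half scan together with the two-disjunct wrap-around guard still reaches every $\epsilon$-close pair---to be the main obstacle; once they are established, (iii) and the final connected-component argument are routine.
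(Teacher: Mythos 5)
Your proposal is correct and follows essentially the same route as the paper's own (much terser) proof sketch: the later-processed point of an $\epsilon$-close pair finds the earlier one via the conservative neighbor mask (including the wrap-around case), the pair is then placed in or merged into a common subcluster, and the final \minpts{} filter decides cluster versus noise. Your facts (i)--(iii) simply make explicit the geometric and bookkeeping steps that the paper delegates to the \texttt{getNeighborMask} discussion and to \autoref{sec:implementation}, so no further changes are needed.
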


\begin{proof} (sketch) Consider that, w.l.o.g. $p$ is processed second. As argued in paragraph \textit{getNeighborMask} in the previous section, $p'$ will be found that it belongs in the $\epsilon$-neighborhood of $p$. This implies that they will be merged/inserted in the same subcluster. Unless that subcluster in the end is found to contain fewer points than $MinPts$, it will be return as a final cluster in the end of the main loop of \lisco{}.
\end{proof}

\subsection{Complexity}
In this section, we discuss the complexity analysis of \lisco{} and compare it with \PCLeCluster{}.


Regarding \PCLeCluster{}, the required processing work volume is similar to the DBSCAN algorithm, i.e. building a spatial index (kd-tree) and using it to execute region queries for each point, resulting in an overall \emph{expected time complexity} of $O(nlogn)$ processing steps \cite{wald2006building,ester1996density,patwary2012new}.

\remove{ As discussed in \autoref{sec:pcl}, \PCLeCluster{} has two main parts; building the kd-tree, and clustering data points. 
Building a kd-tree on $n$ data points has a time complexity of O($n log^2 n$) or even O($n^2$). But, it is possible to reduce the building complexity to O($n log n$) \cite{}. To analyze the clustering part of \PCLeCluster{} we need to know the time complexity of a query region in a kd-tree.
As discussed in \cite{ester1996density}, time complexity of a query region for one point with small $\epsilon$ compared to the size of the whole dataset is O(log n). Therefore, the overall average time complexity for \PCLeCluster{} is O(n log n).
} 

\begin{claim}\label{complexityclaim}
\lisco{}'s time complexity is linear in the number of points, multiplied by a factor that depends on the size of the clusters in the set of data points.
In the \emph{worst-case} where there is a big cluster of $O(n)$ points, it can take $O(nlogn)$ processing steps for \lisco{} to complete.
\end{claim}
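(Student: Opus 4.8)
The plan is to account for the total work done by Algorithm~\ref{alg:liscogeneral} over an entire rotation by charging costs to the individual points, and then show that the only super-linear contribution can come from \texttt{merge} operations whose cost is bounded by the size of the subcluster being absorbed. First I would observe that the outer structure of the main loop touches each of the $n$ points exactly once, and for each point $p$ it invokes \texttt{getNeighborMask} (which is $O(1)$ by Algorithm~\ref{alg:getNeighbors}) and then \texttt{cluster}. Inside \texttt{cluster}, the loop ranges over the points $p'$ in the (half) neighbor mask of $p$; the number of such candidate points is determined by $\sigma$ and $\lambda$, i.e. by $\epsilon$, $\Delta\alpha$, $\Delta\theta$ and $p_d$, and is independent of $n$ — call the maximum mask size a constant $m$. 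Hence the number of \texttt{getH} calls and distance comparisons is $O(mn) = O(n)$. Since \texttt{getH} and \texttt{setH} and \texttt{createH} are each $O(1)$ in the chosen implementation (as discussed in \autoref{sec:implementation}), everything except the \texttt{merge} calls contributes $O(n)$ in total, plus the final sweep over $subclusters$, which is also $O(n)$.

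The next step is to bound the aggregate cost of all \texttt{merge} invocations. In the implementation of \autoref{sec:implementation}, a \texttt{merge} relinks the heads of the \emph{smaller} of the two subclusters to the head of the larger one and appends its member array; thus the cost of a single \texttt{merge} is proportional to the size of the smaller subcluster. This is exactly the weighted-union (union-by-size) heuristic, so I would invoke the standard amortized argument: each time a point's head pointer is rewritten, it lies in a subcluster whose size at least doubles, hence any fixed point can have its head rewritten at most $O(\log k)$ times, where $k$ is the size of the final subcluster that contains it. Summing over all points, the total \texttt{merge} work is $O\!\left(\sum_j |C_j|\log|C_j|\right) \le O\!\left(n \log(\max_j |C_j|)\right)$. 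This gives precisely the claimed statement: the running time is $O(n)$ plus a term that scales with the cluster sizes; if all clusters are of bounded size the bound is $O(n)$, and in the worst case of a single cluster of $\Theta(n)$ points it becomes $O(n\log n)$.

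To finish, I would note that a matching worst-case instance is easy to exhibit: a dense chain or blob of $\Theta(n)$ points that are all mutually within, or transitively connected within, distance $\epsilon$, processed in an order that forces $\Theta(n)$ non-trivial merges (for example, many small subclusters forming in parallel along a line and then being stitched together), realizing the $\Omega(n\log n)$ lower bound for this implementation and confirming the bound is tight. I expect the main obstacle to be stating the amortized union-by-size argument cleanly in the continuous setting — in particular making precise that the ``size at least doubles'' invariant still holds even though subclusters grow incrementally as steps arrive, and that the postponed comparisons across the $S$/$1$ column wrap-around do not change the charging scheme (they add only $O(1)$ extra mask-sized work per affected point). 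A secondary point to be careful about is justifying that the neighbor-mask size $m$ is genuinely independent of $n$ — it depends on the sensor geometry and on $p_d$, and one should note that for the points nearest the sensor (small $p_d$) the mask can be large, but it is still bounded by a quantity determined only by $\epsilon$ and the angular resolution, not by the number of points.
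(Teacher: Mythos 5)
Your proposal is correct and follows essentially the same route as the paper's proof sketch: each point incurs constant work for the neighbor mask and for \texttt{createH}/\texttt{setH}/\texttt{getH}, so the only super-linear contribution comes from \texttt{merge}, whose cost is the size of the smaller subcluster under the union-by-size implementation of \autoref{sec:implementation}. Your per-point doubling (weighted-union) charging argument, yielding $O\bigl(\sum_j |C_j|\log|C_j|\bigr)\le O\bigl(n\log(\max_j|C_j|)\bigr)$, is just a slightly more formal rendering of the paper's observation that the worst case is repeated merging of roughly equally sized subclusters, which can happen at most $O(\log n)$ times per point and gives the $O(n\log n)$ bound.
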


\begin{proof} (sketch)
Overall, the time complexity of \lisco{} is the number of iterations in the main loop (i.e. $n$, as the number of points), times the work in each iteration, i.e. for each point (i)~finding its $\epsilon$-neighborhood, and (ii)~working with each point in the neighborhood.

Part (i) from above, induces an asymptotically constant cost, depending on $\epsilon$, as it is performed through the comparisons implied by the masking operation \textit{getNeighborMask}. Part of the  $\epsilon$-neighborhood of a point $p$, i.e. the points with smaller step index, is compared with $p$ through step 2 of Algorithm~\ref{alg:lisco1} on behalf of $p$, while for each of the remaining points $p'$ in p's $\epsilon$-neighborhood, $p$ will be identified  as part of the $\epsilon$-neighborhood of $p'$ when the respective step is executed on behalf of $p'$. 

Regarding part (ii), functions' \texttt{createH} and \texttt{setH} algorithmic implementation incurs a constant number of processing steps each, as we explain in \autoref{sec:implementation}.
Moreover, as explained in the aforementioned section, \texttt{getH}  induces  $O(1)$ time-cost, as a point can identify its head in $O(1)$ (e.g. with a direct link). This incurs a cost that is $O(x)$ for \texttt{merge}, where $x$ is the size of the smaller subcluster, since the merging itself needs to update the head for all the points of the smaller one of the subclusters being merged.

Since the merge function chooses the subcluster with the bigger number of points as the base subcluster, in the worst-case  the clustering has a huge subcluster of $O(n)$ points, and  an unlikely scenario for constructing it, might require \lisco{} to merge roughly equally-sized subclusters at each of the merge operations leading to the big subcluster -- any other combination of subclusters would lead at most to an equal cost as the one described. Since halving $O(n)$ points can be made at most $O(logn)$ times, we can observe that the worst-case \textit{total} number of merge-related processing steps will be dominated by a sequence of $O(n log n)$ steps, which will be the dominating cost in the worst case complexity of \lisco{}.
\end{proof}



\remove{
On the other hand, as discussed in \autoref{sec:lisco}, \lisco{} utilizes the data as delivered from the \lidar{} sensor, so there is no need to build a tree. The processing cost of each point (Algorithm~\ref{alg:lisco1}) depends on the cost of functions \texttt{getNeighborMask}, \texttt{createH}, \texttt{getH}, \texttt{setH} and \texttt{merge}. 
As shown in Algorithm~\ref{alg:getNeighbors}, function \texttt{getNeighborMask} is executed in constant number of steps, since it boils down to a fixed number of numerical operations.

Similarly, functions \texttt{createH} and \texttt{setH} can be also implemented to incur in $O(1)$ complexity, as we discuss in \autoref{sec:implementation}.

However, \texttt{getH} and \texttt{merge} cannot both have both $O(1)$ complexity at the same time.
On one hand, \texttt{getH}  can be made O(1), as a point can identify its head in O(1) (e.g. with a direct link). This nevertheless incurs a cost that is O(x), where x is the size of a subcluster, since the merging itself needs to update the head for all the points of one of the subclusters being merged. 
As we show in Figure~\ref{fig:approachA} (building on the previous example in Figure~\ref{fig:subclusters}) the points pointing to head $H_2$ in Figure~\ref{fig:approachA}.A are pointing to $H_1$ in Figure~\ref{fig:approachA}.B, which implies that the \texttt{setH} method has been invoked for all of them.
On the other hand, \texttt{merge} can be made O(1) by maintaining a hierarchy of subclusters being part of the same subcluster while incurring a higher cost for the \texttt{getH}, linear in the number of subclusters.
Figure~\ref{fig:approachB} shows how some of the points clustered together once all the data is processed point to head $H_1$ via head $H_2$. For these points the \texttt{getH} method has a cost higher than that of the points directly pointing to $H_1$.
} 

\section{Evaluation}\label{sec:evaluation}

In this section, we present our experimental methodology and results of \lisco{} and compare them with those of \PCLeCluster{} algorithm.
Since the clustering outcomes of \PCLeCluster{} and \lisco{} are the same, we do not need to compare the clusters and we can focus on the completion time for each approach.

\subsection{Evaluation setup}
To run \PCLeCluster{} we use the $EuclideanClusterExtraction$ class from PCL library, which is designed to cluster 3D point clouds and is implemented in C++. 
We also implemented \lisco{} in C++11 and compiled both of algorithms with gcc-4.8.4 using the -O3 optimization flag.
All the experiments have been run on the same system running Linux with 2.00GHz Intel(R) Xeon(R) E5-2650 processor and 64GB Ram.

\subsection{Data}
We used both synthetic and real-world datasets.
The real-world dataset has been collected from the \ford{} dataset \cite{pandey2011ford} and the synthetic ones have been generated using the Webots simulator \cite{michel2004cyberbotics}.
We use synthetic datasets to explore the effect of data (e.g. different total number of points that have been collected by the \lidar{}, different densities and distances of objects) on the performance of algorithms.

There are five scenarios for synthetic datasets. SCEN1 and SCEN2 have the same and few number of objects but we changed the position of the objects to near and far from \lidar{}. Near objects reflect more points while far objects reflect fewer points with a larger gap between two nearby points. Similarly, SCEN2 and SCEN3 have the same number of objects (number of objects are more than previous scenarios) and we only changed the position of the objects. Finally, SCEN5 represents a high density environment with a lot of objects. Figure \ref{fig:scenario} shows five simulated environments for several scenarios.
In all the environments, a VelodyneHDL64E is used to collect data points and generate a dataset within one physical rotation.

Table \ref{table:structure} summarizes the properties of the synthetic datasets.
Since we used the same specifications for the \lidar{} in all scenarios, the number of steps and lasers for one physical rotation is the same, so the total number of points (including NULL points and ground points) is the same and it is equal to 72000.
After removing the ground points and eliminating NULL points, we get a number of reflected points.
As shown in the table, with the same number of objects in the environment (e.g. SCEN1 and SCEN2), if we change the position of objects to near or far, we get different numbers of reflected points.

\begin{figure*}[t!]
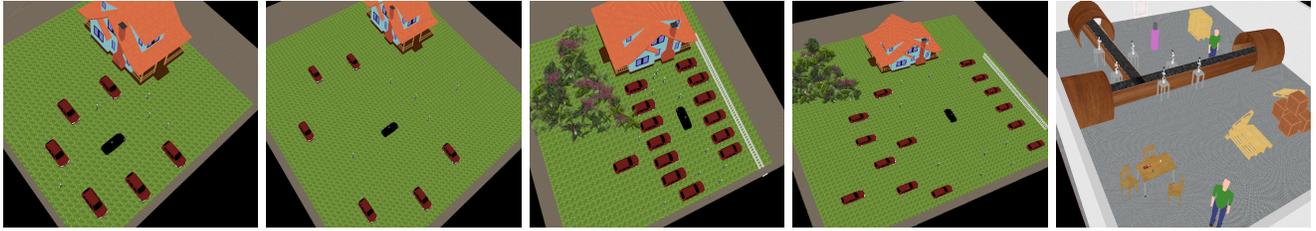

\centering
\subfigure[SCEN1 - The simulated sparse environment in which objects are close to the \lidar{} which is located on the black car]{\includegraphics[width=0.19\textwidth,height=3cm]{figures/scenario1.png}}
\subfigure[SCEN2 - The simulated sparse environment in which objects are far from the \lidar{} which is located on the black car]{\includegraphics[width=0.19\textwidth,height=3cm]{figures/scenario2.png}}
\subfigure[SCEN3 - The simulated dense environment in which objects are close to the \lidar{} which is located on the black car]{\includegraphics[width=0.19\textwidth,height=3cm]{figures/scenario3.png}}
\subfigure[SCEN4 - The simulated dense environment in which objects are far from \lidar{} which is located on the black car]{\includegraphics[width=0.19\textwidth,height=3cm]{figures/scenario4.png}}
\subfigure[SCEN5 - The simulated room for high density environment. The \lidar{} is located on the purple column]{\includegraphics[width=0.19\textwidth,height=3cm]{figures/scenario5.png}}
\caption{Different scenarios for simulated environments.}
\label{fig:scenario}
\end{figure*}

\begin{table}[!h]
\centering
 \begin{tabular}{|c c |}
 \hline
Name & \# Points after removing NULL points and ground points\\
 \hline
 SCEN1 & 26891 \\
 SCEN2 & 16218 \\
 SCEN3 & 39028 \\  
 SCEN4 & 18229 \\
 SCEN5 & 64518 \\
 \hline
\end{tabular}
 \caption{Properties of synthetic datasets and the effect of number of objects and their distances from \lidar{} on number of points after ground removal. }
\label{table:structure}
\end{table}

\subsection{Performance evaluation}
The execution time is measured from the time-instant the first data point of the dataset is received until the time instant when the clustering algorithm has processed all data points of one full physical rotation.
A higher value of $\epsilon$ implies a larger $\epsilon$-neighborhood of a point, hence the clustering algorithm needs more time to search in the neighborhood.

\subsubsection{Synthetic datasets}
Figure \ref{fig:compare} shows the average execution time with confidence level 99\% on 20 runs with different values of $\epsilon$ and constant value of \minpts{} = 10.
Since the maximum margin of error for a confidence level 99\% is small, we can not distinguish them clearly in the figure.
We chose a range $[0.1-1]$ meters for $\epsilon$, so that for example if $\epsilon = 0.4$, all the objects that their closest points have at least 40 centimetres distance from each other, should be detected as separated objects.
While  clustering with smaller values of $\epsilon$ find at least one cluster for each object, bigger values increase the probability of clustering distinct objects together. For example, with $\epsilon=1$ for SCEN3, all the cars at each side of the black car, are clustered together which leads to incorrect segmentation.

\begin{figure*}[t!]
  \centering
  \includegraphics{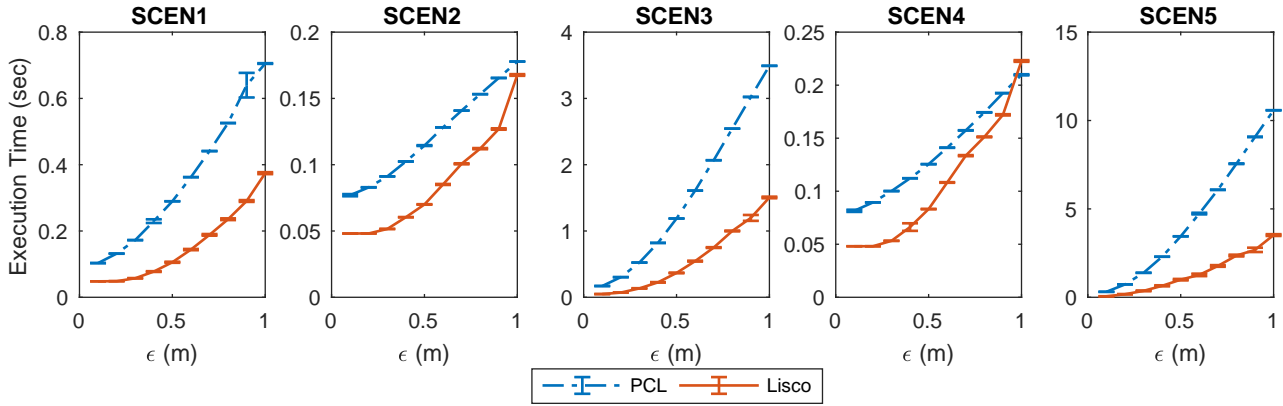}
  \caption{The average execution time on synthetic datasets with confidence level 99\% over 20 runs for \minpts{} = 10 and different values of $\epsilon$}
  \label{fig:compare}
\end{figure*}

As expected, by increasing the value of the $\epsilon$, the execution time increases for both algorithms. As it can be seen, when the number of points is high, regardless of the value of the $\epsilon$, \lisco{} is always faster than PCL. Only for a dataset with a relatively small number of points and when $\epsilon$ is set to a high value PCL has slightly better performance than \lisco{} (figure \ref{fig:compare} SCEN2 and SCEN4). The effect of number of points is also shown in figure \ref{fig:comparePoints}. As discussed in \autoref{sec:analysis} building a kd-tree and using it to find nearest neighbors in PCL becomes a bottleneck when the number of points is high.

\begin{figure}[t!]
  \centering
  \includegraphics{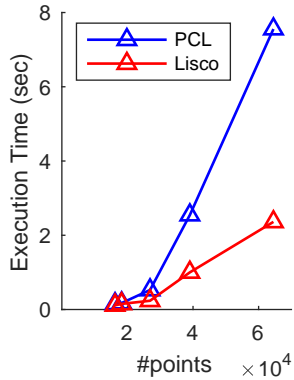}
  \caption{Scalability of \lisco{} and PCL with respect to the number of points}
  \label{fig:comparePoints}
\end{figure}

\subsubsection{Real-world dataset}
This \ford{} dataset is collected by an autonomous ground vehicle testbed, with a Velodyne HDL-64E \lidar{} scanner \cite{pandey2011ford}. The vehicle path trajectory in this dataset contains several large and small objects (e.g. buildings, vehicles, pedestrians, vegetation, etc.).
We have tested \PCLeCluster{} and \lisco{} on 2280 rotations of this dataset and compare their execution times with confidence level 99\%.

Figure \ref{fig:result} shows the results of the comparison for $\epsilon$ values 0.3, 0.4, and 0.7.
Among all the rotations, the minimum number of reflected points after ground removal is 5000, the maximum is 75550, and the average is 50225.
As shown in the figure, \lisco{} outperforms \PCLeCluster{} in real-world datasets. In real-world data, generally there are more objects around the \lidar{} and therefore there are more reflected points besides the ground. Since \lisco{} processes points upon receiving them from \lidar{}, it can save more time and it has thus better throughput.

\begin{figure}[t!]
  \centering
  \includegraphics{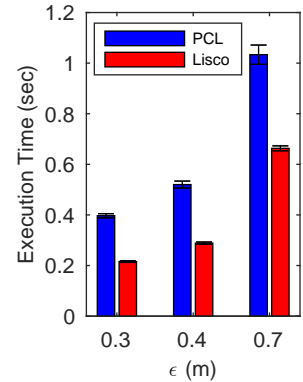}
  \caption{The average execution time of running PCL and \lisco{} over 2280 rotations of real-world dataset for \minpts{} = 10 and different values of $\epsilon$}
  \label{fig:result}
\end{figure}
\section{Other related work}
\label{sec:related}

Data clustering has been studied for several decades and existing algorithms have been categorized into four classes: density-based, partition-based, hierarchy-based, and grid based \cite{han2011data}. Due to their ability in finding arbitrarily shaped clusters without requiring to know the number of the clusters a priori, density-based methods are widely used in different applications. Well-known algorithms of this class include DBSCAN \cite{ester1996density} and OPTICS \cite{ankerst1999optics}. However, to use these algorithms in big data applications and overcome their performance bottleneck in dealing with extremly large datasets, there are several attempts to parallelize DBSCAN \cite{patwary2012new, kumari2017exact}. In parallel models, the clustering procedure is divided into three steps: 1) data distribution (e.g. using kd-tree) 2) local clustering (which is splitted on several machines) 3) merging of local clusters. Although the efficiency is improved by splitting the clustering on several machines, pipelining of steps has not being studied yet.

Rusu et al.~\cite{rusu2009close} introduce an Euclidean-distance-based clustering which is a partition-based clustering method that produces arbitrarily shaped clusters. This approach is designed for unorganized data points. So, to facilitate searching for nearest neighbors, first a kd-tree is built over the dataset and then clustering is being performed. In other works~\cite{woo2002new, vo2015octree}, an octree is used to identify the neighbors before starting the clustering procedure.

Since \lidar{} data points are implicitly ordered, organizing them (e.g. in a tree) may be avoided, similar to the spirit of this paper.
Specifically, the characteristics of the sensor data can be used to establish neighborhood relations between points~\cite{klasing2009realtime, moosmann2009segmentation}.
Klasing~\cite{klasing2009realtime} et al. proposed a clustering method for 2D laser scans that rotate with an independent motor to cover a 3D environment. 
While the proposed method compares points across different scans similarly to the problem studied in this work, the semantics of Definition~\ref{def:problem} are not enforced resulting to a lower accuracy.
Moosmann et al.~\cite{moosmann2009segmentation} proposed an approach to turn the scan into an undirected graph to retrieve the neighborhood information of each point during clustering, but they have not studied pipelining building the graph and clustering.
Zermas et al.~\cite{zermas2017fast} recently proposed a clustering method specific to the structure of \lidar{} data points. This approach processes one \texttt{scan-line} (a layer (rotation) of points that are produced from the same laser) at a time and merges nearby clusters from different scan-lines. However, the entire rotation is needed as the algorithm does more than one pass over the data. Also, this approach, similarly to previous works, still relies on a kd-tree for some necessary nearest neighbor searches.
Moreover, the neighborhood criterion for points clustered in the same scan-line does not take into account the distance of the point from the sensor, and thus does not guarantee the semantics of Definition~\ref{def:problem}.

Clustering \lidar{} data points are being used in wide range of applications \cite{li2012new, klasing2008clustering, sampath2010segmentation}.
Among all, autonomous vehicle applications are one of the most challenging since they need fast and accurate results \cite{wang2012could, himmelsbach2010fast, douillard2011segmentation, zermas2017fast}.
In \cite{douillard2011segmentation}, a set of voxelisation and meshing segmentation methods are presented. Wang et al.\cite{wang2012could} first separates data into foreground and background. then a clustering procedure is conducted only on the foreground segments.


\label{sec:relatedwork}
\section{Conclusions and Future Work}
\label{sec:conclusions}

This work is about one of the challenges in common big data applications, namely
leveraging the  information carried by high-rate streams
through  efficient methods that can rapidly distill
the valuable information from the raw measurements. A
common problem in the analysis of LiDAR sensor data, that generate date at rates of megabytes per second,
is clustering of the raw distance measurements, in order to facilitate detection of objects surrounding the sensor.

\lisco{} represents a streaming approach to process the \lidar{} points while the data is being collected. This characteristic helps  to facilitate extraction of clusters in a continuous fashion and contribute to real-time processing.  By keeping track of the different subcluster heads, \lisco{} can deliver subclusters to the user at anytime by request, i.e. provide continuous information.
\remove{It is necessary to consider that delivering does not mean that subclusters are final, but they are subclusters that clustering algorithm could find so far and they can be changed after collecting more points.
The final clusters can be delivered once all the data is processed.}

Important follow-up questions include the parallelization of \lisco{}'s processing pipeline to take advantage of computing architectures for the corresponding deploy environments. 
This necessitates algorithmic implementations in a variety of processing architectures,  such as manycores/GPUs, SIMD systems, single board devices and high-end servers, to explore \lisco{}'s properties in  a broad range of cloud and fog architectures and evaluate its impact on applications that can be deployed on such systems.
In addressing such questions it will be useful to leverage the benefits of efficient fine-grained synchronization methods in streaming-centered and bulk-operations-enabled data structures, as proposed in \cite{cederman2013concurrent,Gulisano2015debs,gulisano2016scalejoin,Nikolakopoulos2016ericsson}. 


\end{document}